\documentclass{article}
\usepackage{geometry}                
\geometry{letterpaper}                   
\usepackage{graphicx}
\usepackage{stmaryrd}
\usepackage{amssymb}
\usepackage{amsthm, bm}
\usepackage{amsmath, cancel, centernot }
\usepackage[mathscr]{eucal}
\usepackage{mathtools}
\usepackage{epstopdf}
\DeclareGraphicsRule{.tif}{png}{.png}{`convert #1 `dirname #1`/`basename #1 .tif`.png}

\title{Discretised Hilbert Space and Superdeterminism}
\author{T.N.Palmer \\ Department of Physics, University of Oxford, UK}
\date{\today}                                          
\makeatletter
\newcommand\be{\@ifstar{\[}{\begin{equation}}}
\newcommand\ee{\@ifstar{\]}{\end{equation}}}
\newcommand\bp{\begin{pmatrix}}
\newcommand\ep{\end{pmatrix}}

\newtheorem{theorem}{Theorem}

\newtheorem*{corollary}{Corollary}
\makeatother
\begin{document}
\bibliographystyle{plain}
\maketitle

\begin{abstract}

In computational physics it is standard to approximate continuum systems with discretised representations. Here we consider a specific discretisation of the continuum complex Hilbert space of quantum mechanics - a discretisation where squared amplitudes and complex phases are rational numbers. The fineness of this discretisation is determined by a finite (prime-number) parameter $p$.  As $p \rightarrow \infty$, unlike standard discretised representations in computational physics, this model does not tend smoothly to the continuum limit. Instead, the state space of quantum mechanics is a singular limit of the discretised model at $p=\infty$. Using number theoretic properties of trigonometric functions, it is shown that for large enough values of $p$, discretised Hilbert space accurately describes ensemble representations of quantum systems within an inherently superdeterministic framework, one where the Statistical Independence assumption in Bell's theorem is formally violated. In this sense, the discretised model can explain the violation of Bell inequalities without appealing to nonlocality or indefinite reality. It is shown that this discretised framework is not fine tuned (and hence not conspiratorial) with respect to its natural state-space $p$-adic metric. As described by Michael Berry, old theories of physics are typically the singular limits of new theories as a parameter of the new theory is set equal to zero or infinity. Using this, we can answer the challenge posed by Scott Aaronson, critic of superderminism: to explain when a great theory in physics (here quantum mechanics) has ever been `grudgingly accommodated' rather than `gloriously explained' by its candidate successor theory (here a superdeterministic theory of quantum physics based on discretised Hilbert space). 
\end{abstract}

\section{Introduction}
\label{intro}

Superdeterminism is an approach to explain the experimental violation of Bell inequalities without invoking nonlocality or indefinite realism. Specifically, a superderministic model negates the Statistical Independence assumption in Bell's theorem \cite{Brans} \cite{Hall:2010} \cite{tHooft:2015b} \cite{HossenfelderPalmer} \cite{DonaldiHossenfelder}: that for a probability $\rho$ on hidden variables $\lambda$ with measurement settings $X$
\be
\label{statind}
\rho(\lambda | X)=\rho(\lambda)
\ee
It is easy to parody a negation of (\ref{statind}). Superficially, it seems to imply grotesque conspiracies that surely no sane physicist would contemplate. For example, superdeterminism appears to require the universe to be initialised in precisely that state which ensures that, billions of years later, the experimenters choose just the right measurement settings (and only those) which lead to a violation of Bell inequalities. Such a picture of the world not only appears to deny experimenters any freedom to choose which experiments they perform -- with some commentators likening it to alien mind control --  it also suggests a theory of the universe that is ludicrously fine tuned. After all, the mere flapping of a butterfly's wings might cause an experimenter to choose a different measurement setting to the one she actually chose. This butterfly, were it allowed to flap, would completely undo the carefully woven interplay between degrees of freedom, supposedly incorporated into the superdeterministic cosmic initial conditions. We must therefore excise all such butterflies in a superdeterministic theory, giving the theory a preposterously conspiratorial flavour.

In any case, so the argument goes, why should one bother to look for alternative explanations for the violation of Bell's inequality to the standard quantum mechanical ones?  Quantum mechanics is, after all, a supremely well-tested theory and has never been found wanting.

Scott Aaronson concludes his excoriating critique of superdeterminism \cite{Aaronson} with a challenge to superdeterminists: to explain when a great theory in physics (in this case quantum mechanics) has ever been `grudgingly accommodated' rather than `gloriously explained' by its candidate successor theory (in this case the supposed superdeterministic theory). If the parody above were accurate, one would indeed be hard pressed to find any sort of answer to Aaronson's challenge. 

However, the parody is not accurate. In this paper, a superdeterministic model of quantum physics is developed whose ensemble characteristics are described by a simple and intuitively appealing concept - the discretisation of Hilbert space \cite{Buniy:2005} \cite{Buniy:2006}. Much of computational physics is based on discretising continuum systems such as fluids, so why not Hilbert space too? In so doing, the parody described above is shown not to be an accurate account of superdeterminism. 

As to why we should seek an alternative explanation for the violation of Bell's inequalities, the answer is simple. Despite over 70 years of intense research, we have not yet been able to synthesise our two great theories of physics: quantum mechanics and general relativity. Perhaps the reason for this is staring us in the face: the notions of indefinite reality and/or the violation of local causality -- the conventional explanations for the violation of Bell inequalities -- are incompatible with the locally causal nonlinear geometric determinism of general relativity.  The fact that it is impossible to signal classical information superluminally in quantum mechanics does not resolve this seeming incompatibility: at the very least there is an unresolved tension here. As to how well tested quantum mechanics really is, it is still an open question as to whether particles whose masses are large enough that their self-gravitation can be detected, behave quantum mechanically. 

Section \ref{dhs} describes the proposed discretisation of Hilbert space. The fineness of the discretisation is governed by a parameter $p$. Here it is shown that a natural form of discretisation (which treats amplitudes and complex phases differently), can take advantage of number theoretic properties of simple trigonometric functions, leading inexorably, no matter how large is $p$, to the concept of counterfactual indefiniteness. In Section \ref{examples} these properties are exploited to discuss two key experiments in quantum physics: the Mach-Zehnder interferometer and the CHSH \cite{CHSH} test of Bell's inequality. In Section \ref{det} a deterministic model of quantum physics is outlined, whose ensemble properties are described by discretised Hilbert space. It is shown how the Uncertainty Principle arises naturally in this superdeterministic model from geometry and number theory. Crucially, it is shown how a natural metric of distance in this model is $p$-adic, and that with respect to this metric the model is neither fine tuned nor conspiratorial. In Section \ref{aar} Aaronson's challenge is discussed. As a potential successor theory, our candidate superdeterministic theory can explain quantum mechanics in exactly the same way that new theories of physics have typically explained old theories since the time of Galileo and Newton -- as a singular limit as a parameter of the new theory is set to zero or infinity. The state space of quantum mechanics arises, not as $p \rightarrow \infty$ but at $p=\infty$. In the Conclusions section we summarise what is wrong with the parody of superdeterminism as presented above. 

It can be noted that this paper extends and simplifies a number of technical results in an earlier paper \cite{Palmer:2020}

\section{Discretising Complex Hilbert Space}
\label{dhs}

We start by returning to the insight that kickstarted the quantum revolution - Max Planck's revolutionary proposal that light energy varies discontinuously in quanta. 

Despite this fundamental postulate, the state space of the theory resulting from Planck's insight  -- quantum mechanics -- is itself a continuous space. Indeed, the continuity of complex Hilbert space can be viewed as one of the most important properties that distinguishes it from classical physics (\cite{Hardy:2004}). We start our model of superdeterminism by applying Planck's proposal once more to the continuum complex Hilbert Space of quantum mechanics: we discretise Hilbert space \cite{Buniy:2005} \cite{Buniy:2006}. At the experimental level this is surely unexceptionable. All experiments which confirm quantum mechanics will necessarily confirm a model of quantum physics based on discretised Hilbert Space, providing the discretisation is fine enough. Conversely, there will be experimental consequences of such a discretisation only if it is possible to probe experimentally beyond the discretisation limit. Below we use the symbol $p$ to denote the fineness of the discretisation -- the larger is $p$ the finer is the discretisation. It would appear $p$ is sufficiently large that it is currently not possible to probe experimentally the \emph{direct} consequences of this discretisation (only indirect ones described in this paper) - but see \cite{HancePalmerRarity}. None of the results in this paper depend on the size of $p$ so long as it is $>12$. In particular, $p$ can be as large as we like, but not infinite. 

A specific form of discretisation is proposed. Specifically, consider a model of quantum physics where complex Hilbert product states of finite dimensional systems of the form
\be
|000\ldots 0\rangle+a_1e^{i \phi_1}|000\ldots 1\rangle+\ldots a_{2^n}e^{i \phi_{2^n}}|11\ldots 1\rangle
\ee
can be deemed physically realistic, only if for all $1 \le j \le 2^n$, $a^2_j \in \mathbb Q$ and $\phi_j / 2\pi \in \mathbb Q$. That is to say, no Hilbert state where squared amplitudes or complex phases are irrational is a realistic state of the proposed model. For a single qubit
\be
\label{qubit}
\cos \frac{\theta}{2} |0\rangle + \sin \frac{\theta}{2} e^{i \phi} |1\rangle
\ee
the discretised Bloch Sphere describes Hilbert vectors of the form (\ref{qubit}), providing $\cos^2 \frac{\theta}{2} \in \mathbb Q \implies \cos \theta \in \mathbb Q$, and $\frac{\phi}{2\pi} \in \mathbb Q$. More specifically, we invoke the finite discretistion
\be
\label{dis}
\cos^2  \frac{\theta}{2} = \frac {m}{p}, \ \ \ 
\frac{\phi}{2\pi} = \frac{n}{p} 
\ee
For reasons discussed below $p>12$ is a prime number. For experimental reasons (discussed above), we assume $p \gg 0$. Here $n$ and $m$ are positive integers with $0 \le n \le p $, $0 \le m \le p$. Note that the discretisation can be made as dense as we like on the Bloch sphere by making $p$ sufficiently large (and as Euclid first showed, there are infinitely many prime numbers).

There are two key reasons why such a discretisation is natural. Firstly, real unit Hilbert vectors with rational square amplitudes can describe finite systems (including classical systems) under uncertainty. Consider a shuffled pack of face-down cards. The state of the top card can, for example, be described by the unit Hilbert state
\be
\label{hilbert}
\sqrt{\frac{1}{52}} | \text{ace of clubs}\rangle + \sqrt{\frac{51}{52}} |\text{not ace of clubs}\rangle
\ee  
This state can be represented as a unit vector in 2D Euclidean space because of Pythagoras's Theorem \cite{Isham}. Interpreting these squared amplitudes as probabilities (Born's Rule) is consistent with the fact that the probability of the top card being either the ace of clubs or not the ace of clubs is equal to one. 

Conversely, a natural way to discretise the unit complex numbers is through the set $\{e^{2\pi i n/p}\}$ of $p$th roots of unity, a cyclic multiplicative group of order $p$. In Section \ref{det} we represent the elements of such a group by the set of cyclic permutations of a bit string with $p$ elements, where $p >12$ is a prime number. A crucial number theorem for all that follows is what we refer to as Niven's Theorem
\begin{theorem}
\label{niven}
 Let $\phi/2\pi \in \mathbb{Q}$. Then $\cos \phi \notin \mathbb{Q}$ except when $\cos \phi =0, \pm \frac{1}{2}, \pm 1$. \cite{Niven, Jahnel:2005}
\end{theorem}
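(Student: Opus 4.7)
The plan is to prove Niven's theorem via algebraic integers. Write $\phi/2\pi = k/n$ in lowest terms with $k,n\in\mathbb{Z}$, so $n\phi = 2\pi k$ and $\cos(n\phi)=1$. The strategy is to show that $2\cos\phi$ is an algebraic integer, and then exploit the fact that the only algebraic integers in $\mathbb{Q}$ are the rational integers. Since $|2\cos\phi|\le 2$, the value of $2\cos\phi$ must then lie in $\{-2,-1,0,1,2\}$, giving exactly the five exceptional values $\cos\phi\in\{-1,-\tfrac12,0,\tfrac12,1\}$.

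The first step is to establish a Chebyshev-type identity: there exists a monic polynomial $P_n\in\mathbb{Z}[x]$ of degree $n$ such that
\[
 2\cos(n\theta) = P_n(2\cos\theta).
\]
I would prove this by induction on $n$, using the product-to-sum formula in the form
\[
 2\cos((n+1)\theta) = (2\cos\theta)(2\cos(n\theta)) - 2\cos((n-1)\theta),
\]
with base cases $P_0(x)=2$ and $P_1(x)=x$. The recursion $P_{n+1}(x) = x\,P_n(x) - P_{n-1}(x)$ preserves both monicity and integer coefficients, closing the induction. Setting $\theta=\phi$ yields $P_n(2\cos\phi)=2$, so $2\cos\phi$ is a root of the monic integer polynomial $P_n(x)-2$, hence is an algebraic integer.

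The second step is the standard lemma that $\mathbb{Z}$ is integrally closed in $\mathbb{Q}$: if $\alpha=a/b\in\mathbb{Q}$ (in lowest terms) satisfies a monic integer polynomial equation $\alpha^m + c_{m-1}\alpha^{m-1}+\cdots+c_0=0$, then clearing denominators forces $b\mid a^m$, which with $\gcd(a,b)=1$ forces $b=\pm 1$, so $\alpha\in\mathbb{Z}$. Applying this to the rational algebraic integer $2\cos\phi$ gives $2\cos\phi\in\mathbb{Z}$, and then the bound $|2\cos\phi|\le 2$ completes the proof.

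The main obstacle is the clean formulation of the first step; everything else follows from standard facts. One must be careful to verify that the two-term recursion above truly delivers a monic integer polynomial at every step (rather than, say, a polynomial with a leading coefficient that grows), and to handle the degenerate cases $n\in\{0,1\}$ so that the induction is well-founded. Once the polynomial identity is in place, the reduction to the five exceptional values is essentially immediate.
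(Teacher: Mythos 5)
Your proof is correct. A point of comparison worth knowing: the paper itself offers no proof of this theorem at all --- it simply quotes the result with citations to Niven and to Jahnel --- and the argument you give (the monic integer recursion $P_{n+1}(x)=x\,P_n(x)-P_{n-1}(x)$ showing $2\cos\phi$ is an algebraic integer, followed by the fact that $\mathbb{Z}$ is integrally closed in $\mathbb{Q}$ and the bound $|2\cos\phi|\le 2$) is precisely the standard proof found in those cited references, so you have in effect supplied the proof the paper outsources. One cosmetic point: your base case $P_0(x)=2$ is a constant with leading coefficient $2$, so it is not monic; this does no harm, because the inductive step $P_{n+1}=xP_n-P_{n-1}$ only needs monicity of $P_n$ together with $\deg P_{n-1}=n-1<n+1$, so monicity genuinely propagates from $P_1(x)=x$ onward, but you should state the induction hypothesis that way rather than claim the recursion ``preserves monicity'' of both predecessors. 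Also make explicit that the hypothesis $\cos\phi\in\mathbb{Q}$ is assumed when you call $2\cos\phi$ a \emph{rational} algebraic integer --- you are proving the contrapositive form of the statement, which is exactly what the theorem asserts.
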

Based on Niven's Theorem we derive the `Impossible Triangle Corollary' which is central to the results below
\begin{figure}
\centering
\includegraphics[scale=0.3]{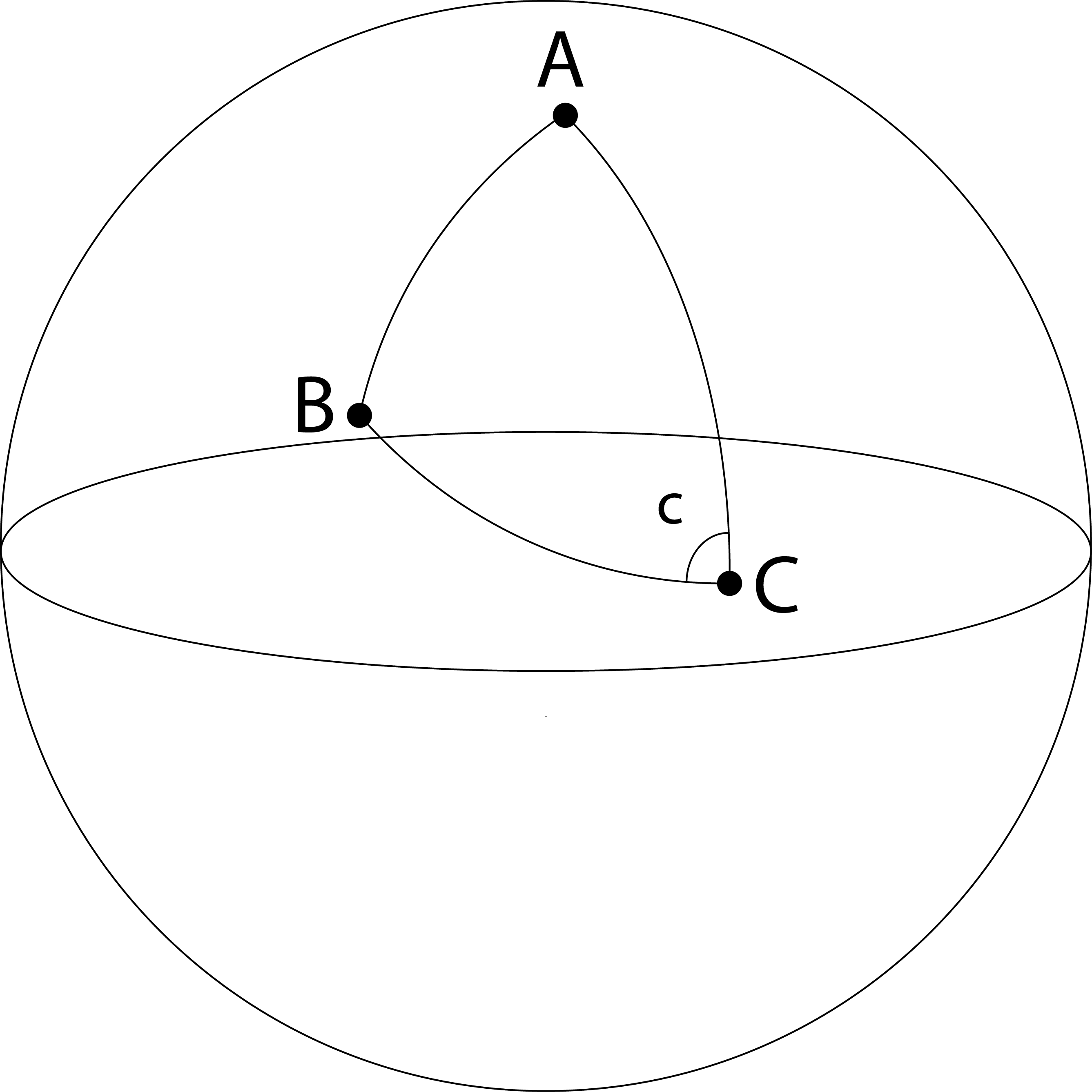}
\caption{\emph{Let $\triangle ABC$ denote a non-degenerate triangle where the cosines of the angular lengths of $AC$ and $BC$ are rational numbers, and the angle $c$ subtended at $C$ is rational. By Niven's theorem the cosine $\cos AB$ of the angular length of $AB$ is typically irrational, with a small number of exceptions. All of these exceptions can be ruled out if $\cos AB$ is of the form $2 \frac{m}{p} -1$, c.f. (\ref{dis}), where $p>12$ is a prime number and $0 \le m \le p$ is an integer.}}
\label{ABCtriangle}
\end{figure}

\begin{corollary}
\label{triangle}
Let $\triangle ABC$ denote a non-degenerate triangle on the unit sphere, with vertices $A$, $B$ and $C$ (see Fig \ref{ABCtriangle}). We suppose the angles subtended at $A$, $B$ and $C$ are of the form $2\pi n/p$ for $1<n<p$. Then it is impossible for the angular lengths ($AB$, $BC$, $CA$) of all three sides of the triangle to have rational cosines.
 \end{corollary}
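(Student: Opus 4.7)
The plan is to apply the spherical law of cosines at vertex $C$,
\[
\cos c = \cos a \cos b + \sin a \sin b \cos C,
\]
where $a,b,c$ denote the angular lengths of the sides opposite $A$, $B$, $C$, and then combine Niven's Theorem with the arithmetic of the prime $p$ to force a contradiction.

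Arguing by contradiction, I would suppose $\cos a,\cos b,\cos c \in \mathbb{Q}$. Then $\sin^{2}a = 1 - \cos^{2}a$ and $\sin^{2}b = 1 - \cos^{2}b$ are rational, and rearranging the law of cosines gives $\sin a\,\sin b\,\cos C = \cos c - \cos a \cos b \in \mathbb{Q}$. Squaring yields $\sin^{2}a\,\sin^{2}b\,\cos^{2}C \in \mathbb{Q}$. Non-degeneracy of $\triangle ABC$ excludes angular side lengths in $\{0,\pi\}$, so $\sin a\,\sin b \neq 0$ and I may divide to obtain $\cos^{2}C \in \mathbb{Q}$.

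Next I would apply Niven's Theorem to $2C$. Since $(2C)/(2\pi) = 2n/p \in \mathbb{Q}$ and $\cos 2C = 2\cos^{2}C - 1 \in \mathbb{Q}$, the theorem forces $\cos 2C \in \{0,\pm 1/2,\pm 1\}$, which pins $\cos^{2}C$ to one of $\{0,1/4,1/2,3/4,1\}$. Discarding the degenerate cases $C \in \{0,\pi\}$ (excluded by $1 < n < p$), this confines $C$ to the finite list $\{\pi/6,\pi/4,\pi/3,\pi/2,2\pi/3,3\pi/4,5\pi/6\}$, equivalently requiring $n/p \in \{1/12,1/8,1/6,1/4,1/3,3/8,5/12\}$ in lowest terms. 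But for $p > 12$ prime and $1 \le n < p$ we have $\gcd(n,p)=1$, so $n/p$ is already reduced with denominator $p > 12$, contradicting every listed value.

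The most delicate step will be the exhaustive enumeration above: a single overlooked possibility for $\cos 2C$ would sink the argument. The bound $p > 12$ is in fact sharp, since the largest denominator arising in the forbidden list of $n/p$ values is exactly $12$, which retroactively motivates the threshold appearing in equation~(\ref{dis}). A minor subtlety worth flagging is the decision to apply Niven to $2C$ rather than to $C$ itself, since the quantity shown to be rational is $\cos^{2}C$ rather than $\cos C$.
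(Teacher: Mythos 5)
Your proof is correct and follows essentially the same route as the paper's: the spherical cosine rule at the vertex, propagation of rationality, squaring, Niven's theorem applied to the doubled angle $2C$, and elimination of the exceptional values using primality of $p>12$ together with non-degeneracy. You are in fact more explicit than the paper at two points---dividing by the nonzero rational $\sin^{2}a\,\sin^{2}b$ and exhaustively enumerating the forbidden values of $n/p$---though your closing remark that $p>12$ is \emph{sharp} is slightly overstated: since $\gcd(n,p)=1$ already forces the reduced denominator to be $p$, primality alone excludes every forbidden denominator except $p=3$, so any prime $p\ge 5$ would do; the value $12$ is better described as the threshold that works independently of which exceptional angle occurs.
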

\begin{proof}
We make use of the cosine rule for spherical triangles:
\be
\label{cosinerule}
\cos AB = \cos AC \cos BC + \sin AC \sin BC \cos c
\ee
Let us suppose that all of $\cos AB$, $\cos BC$ and $\cos AC$ are rational. Then $\sin AC \sin BC \cos c$ must be rational. Hence, squaring, $(1- \cos^2 AC)(1-\cos^2 BC) (2 \cos 2c - 1)$ must be rational. Hence $\cos 2c$ must be rational. Since, by assumption, $c$ is rational, then by Niven's Theorem $\cos 2c$ must equal either $0$, $\pm \frac{1}{2}$ or $\pm 1$. All of these exceptions can be ruled out, either because of the form $c = 2\pi n/p$ for prime $p>12$, or because $\triangle ABC$ is non-degenerate (so that $c \ne 0$ or $\pi$ precisely). 
\end{proof}

\section{Consequences of the Discretisation}
\label{examples}
Consider two immediate consequences of Niven's Theorem. 
\subsection{Interpreting the Mach-Zehnder Interferometer}
Consider a standard Mach-Zehnder interferometer (see Fig \ref{MZ}) where $\Delta \phi$ denotes a phase difference between the two arms of the interferometer. We will assume $\Delta \phi$ is not equal to zero \emph{precisely}: ubiquitous astrophysical gravitational waves will ensure this, as LIGO has demonstrated. From standard quantum mechanics, the Hilbert vector associated with the output of the interferometer is 
\be
\psi_I= \cos \frac{\Delta \phi}{2} |A\rangle +\sin \frac{\Delta \phi}{2}|B\rangle
\ee
By contrast, if we remove the second half-silvered mirror so now the device simply measures which arm a particular particle took after passing through the first beam splitter (Fig \ref{MZ}b) then the relevant Hilbert vector is
\be
\psi_{W}= |B\rangle + e^{i \Delta \phi} |A\rangle
\ee
Let us try to interpret this in our discretised Hilbert space. Suppose Alice performs an experiment with a Mach-Zehnder device. She has to decide whether to perform the experiment with the second half-silvered mirror in place, or removed. If she performs an interferometric experiment, then we can conclude that $\cos \Delta \phi $ must be rational. On the other hand, if she performs a which-way experiment, then $\Delta \phi$ must be rational. We can now invoke Niven's theorem. If $\Delta \phi \ne 0$ precisely, and if $\Delta \phi$ is of the form $2\pi n/p$ then we can disregard all exceptions to Niven's theorem and deduce that it is not simultaneously possible for $\cos \Delta \phi$ and $\Delta \phi$ to be rational. That is to say,  it is not simultaneously possible to perform an interferometric and a which-way experiment on the same particle. This of course, is exactly what quantum mechanics says, except that here we invoke number theoretic arguments to explain this property of the quantum system. 
\begin{figure}
\centering
\includegraphics[scale=0.3]{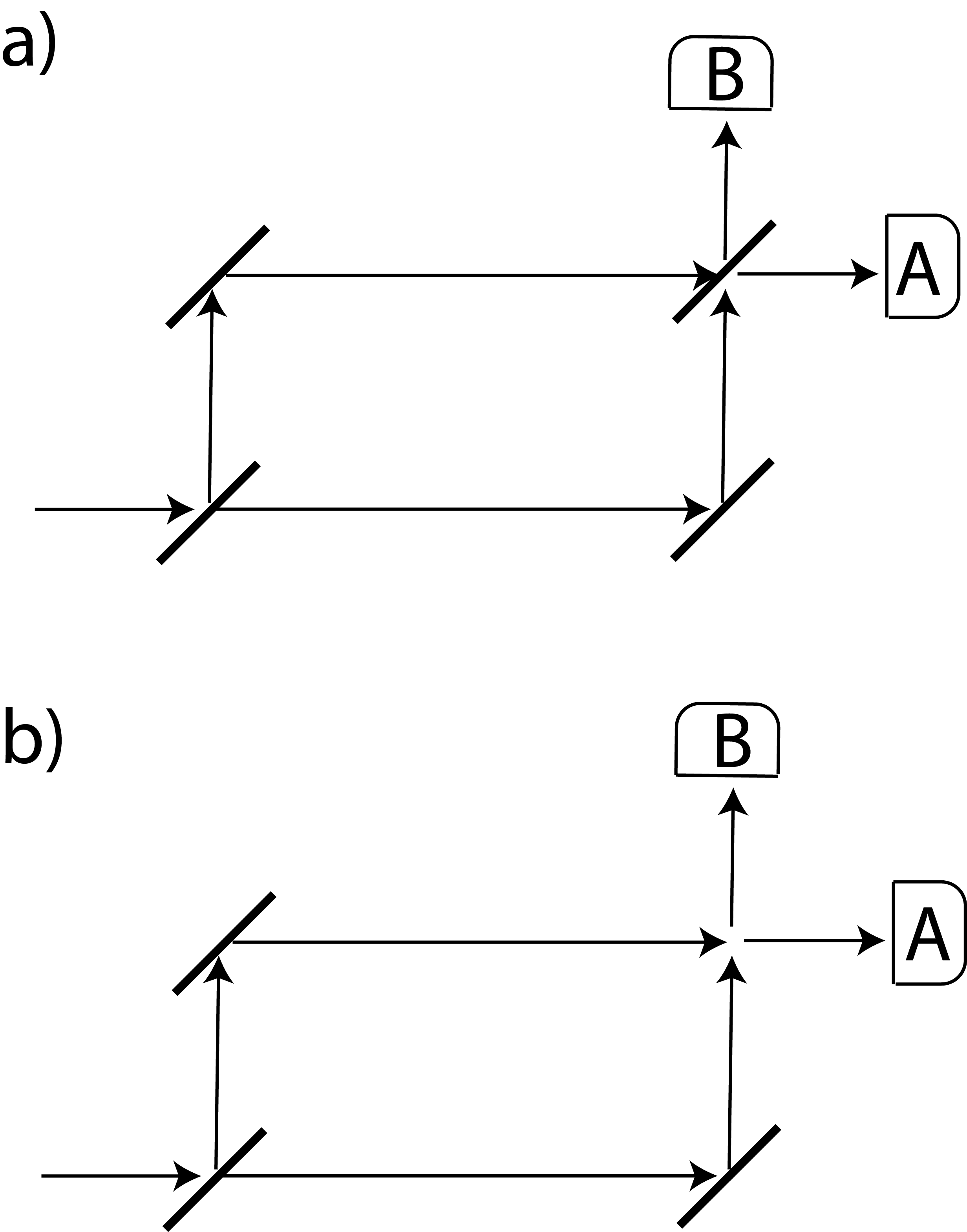}
\caption{\emph{a) A Mach-Zehnder interferometer configured to make an interferometric measurement. b) As in a) but with the second half-silvered mirror removed, thus configured to make a which-way experiment.}}
\label{MZ}
\end{figure}

By saying that two conceivable measurements are not simultaneously possible, we are effectively claiming that a counterfactual world where Alice performed a which-way experiment on a particle, where in the real world she performed an interferometric measurement on that same particle, is inconsistent with our discretisation of Hilbert space. Essentially the counterfactual state corresponds to a Hilbert vector that lies off the skeleton of discretised Hilbert states. It doesn't matter how fine the discretisation is, the counterfactual will never lie on the discretised skeleton. 

However, the problem of fine tuning raises its head. From the Euclidean geometric perspective,  a point on the circle associated with a rational angle can lie arbitrarily close to a point associated with an irrational angle. If we are to make a plausible model of quantum physics based on this discretisation of Hilbert space, the model cannot be based on Euclidean geometric concepts. We return to this point in Section \ref{det} and the Conclusions section.  

\subsection{Interpreting Bell's Theorem}
\label{Bell}

Making use of the Impossible Triangle Corollary, the proposed discretisation of Hilbert Space can form the basis for an explanation of Bell's inequality without having to invoke nonlocality (i.e., without having to violate local causality) or indefinite reality. 

Consider a CHSH experiment where $X=0,1$ denote Alice's choices of spin measurement, and $Y=0,1$ denote Bob's choices. Bell's inequality is, at heart, a number theoretic inequality that is an inevitable property of a look-up table for the spins of either Alice or Bob's particles relative to the four measurement orientations $X=0$, $X=1$, $Y=0$ and $Y=1$. Such a look-up table would be implied by some computable function $Sp(\lambda, X)$ from a classical hidden-variable theory. See Fig \ref{lookup}a. 
\begin{figure}
\centering
\includegraphics[scale=0.6 ]{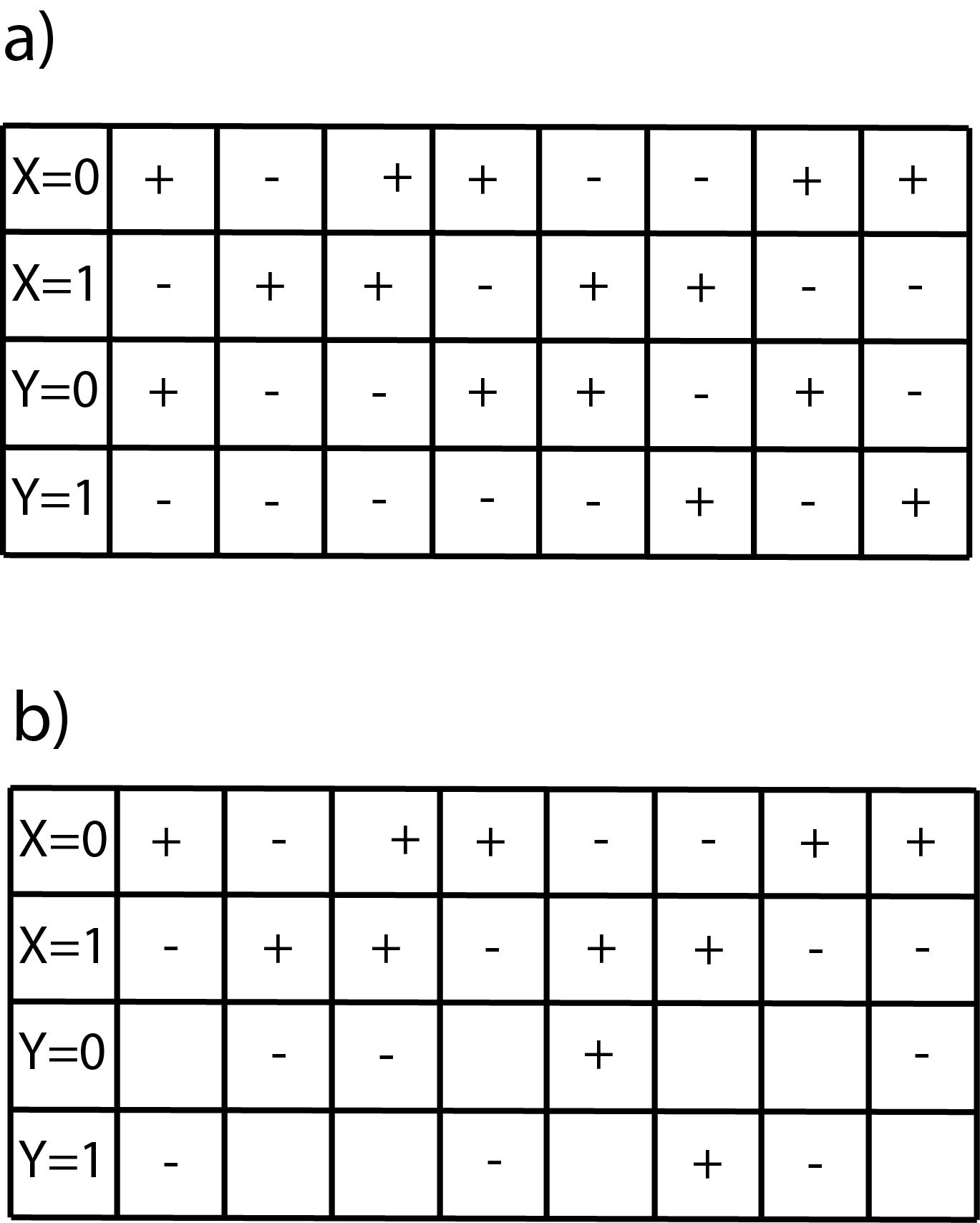}
\caption{\emph{a) An example of a putative look-up table associated with a conventional hidden-variable theory, applied to a CHSH experiment. Each column represents one of Alice's ensemble of 8 particles, each with its own hidden variables. The rows represents the four measurement orientations $X=0$, $X=1$, $Y=0$ and $Y=1$. Here the value '+' denotes spin up, '-' denotes spin down. b) The same look-up table in a model based on discretised Hilbert space. Because of the Impossible Triangle Corollary, there will always be one blank entry for each column, corresponding to a counterfactual measurement for which there is no + or - outcome. Hence it is impossible to establish a look-up table and Bell's inequality cannot be established. Because this model is based on an arbitrarily fine discretisation of complex Hilbert space, this model violates Bell inequalities as does quantum mechanics for large enough $p$. This discussion makes no reference to causality, local or otherwise.}}
\label{lookup}
\end{figure}
The key result proved below is that such a look-up table cannot be inferred from our discretised Hilbert space model. 

For each column, we can immediately fill in one row of the putative look-up table using the observed outcome of the CHSH measurement that was actually performed. For example, suppose (without loss of generality) Alice measures the fourth of her particles in the $X=0$ direction as spin up. Then that column will have a + value in the $X=0$ row, as shown in Fig \ref{lookup}.

The fact that Bob performs a measurement on his entangled particle allows Alice to infer a second + or - entry in her look-up table. For example, if Bob chooses $Y=0$ and obtained +, then (by the singlet nature of the entangled state), Alice can infer that had she measured her particle in the $Y=0$ direction, she would have measured -.  This is an inference made from a measurement that Alice did not make, but might have made: it is an inference from a counterfactual measurement. 

The orientation of the measuring devices associated with any one of the labels $X=0$, $X=1$, $Y=0$ and $Y=1$ will not be \emph{precisely} the same from one column (one entangled particle pair) to the next. There will inevitably be small fluctuations that cannot be controlled by the experimenter between one measurement (one column) and the next, e.g. from gravitational waves or hand trembling to use a phrase by John Bell. However, when Alice and Bob perform a particular pair of measurements on a particle pair, we can assume their choice of $X=0$, $X=1$, $Y=0$ and $Y=1$ implies some particular precise measurement orientations, denoted by points on the celestial sphere (see Fig \ref{CHSHtriangle}). The precise position of these points will vary slightly from one particle pair (look-up table column) to the next. However, when we come to evaluate counterfactual measurements, such as the one above, i.e. ask what the outcome would have been had Alice measured in Bob's chosen direction, we are referring to Bob's \emph{precise} direction, and not just some nominal direction. 

Let us suppose that the angular distance between Alice and Bob's measurement orientation on the celestial sphere is equal to $\theta$. In our discretised model of Hilbert space (where the entangled singlet state is evaluated in Alice and Bob's actual measurement basis) we require $\cos \theta$ to be rational. Since rational cosines are dense on the sphere (and the discretisation is as fine as we like for large-enough $p$), it is in practice no constraint to assume that this condition can be met. 

Of course, the converse is also true. Bob can similarly assert counterfactually that had he measured in Alice's direction, he would have obtained the opposite to what Alice obtained. 

\begin{figure}
\centering
\includegraphics[scale=0.3]{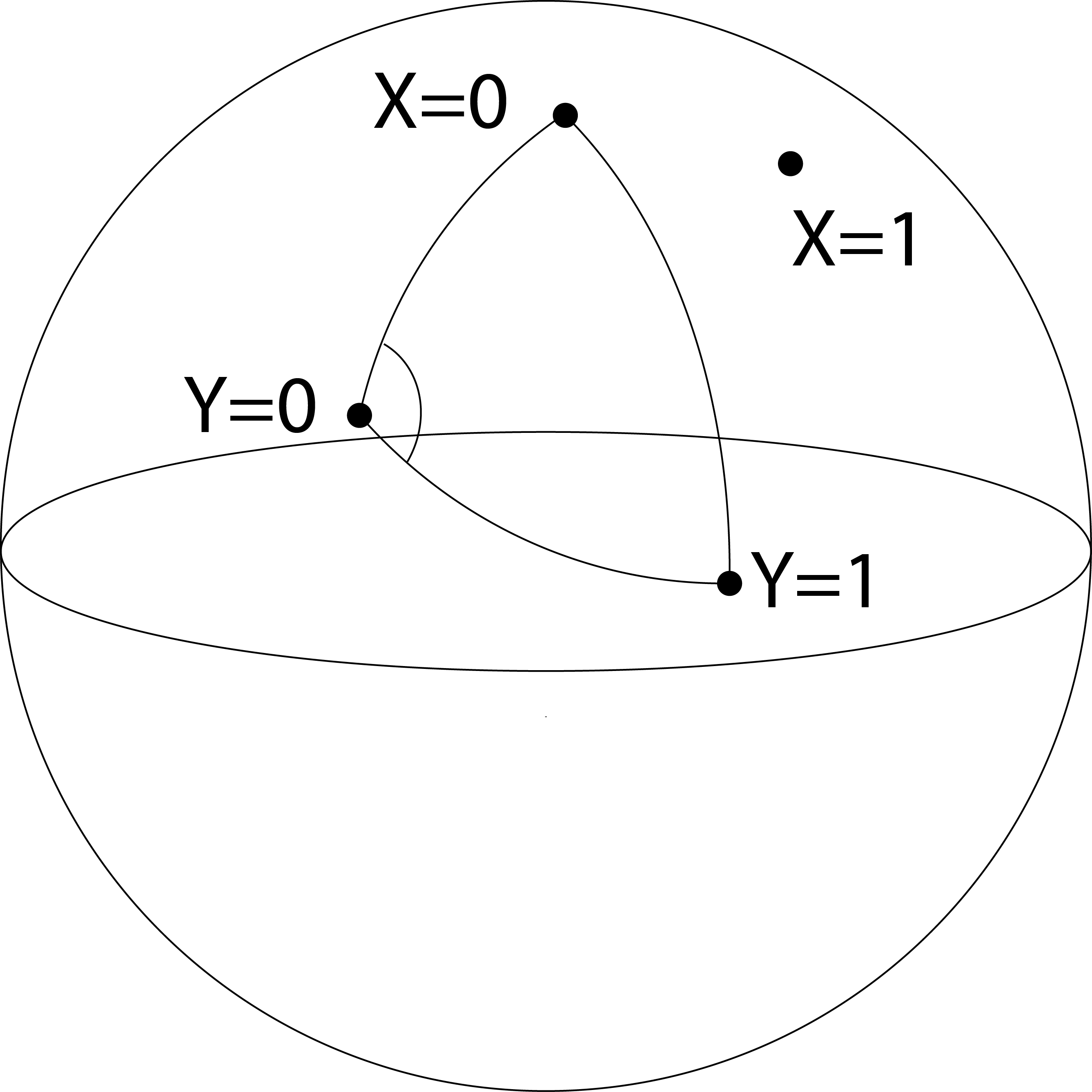}
\caption{\emph{We represent the four choices of measurement orientation in a CHSH experiment, for a particular entangled particle pair, by four points on the unit celestial sphere. We can imagine, without loss of generality, that Alice chooses $X=0$ and Bob $Y=0$. Alice can infer that had she measured in Bob's direction she would have measured the opposite of what Bob measured. However, as discussed in the text, by the Impossible Triangle Corollary she cannot infer that had she measured in Bob's counterfactual direction, she would have measured the opposite of what Bob would have measured. As a result, it is impossible to infer that a hidden-variable theory whose statistics are described by the discretisation of Hilbert Space described above, must satisfy Bell's inequality. Indeed precisely because the discretisation can be arbitrarily fine, such a model must violate Bell's inequality exactly as does quantum theory.}}
\label{CHSHtriangle}
\end{figure}

On this basis we can fill with +s and -s two rows in the fourth column (and by implication any column) in Alice's look-up table. What about the other two rows? In the example where Alice measures with $X=0$, the $X=1$ direction is counterfactual. As before, it is no constraint in practice to assume that the cosine of the angular distance between $X=0$ and $X=1$ is rational (noting again that the precise value of this rational may well vary from one column to the next). Hence we can in principle fill in a third column of Alice's look-up table with either a + or a - (even though we may not know what the correct value is). 

Let's now turn to $Y=1$. In the example where Bob actually measures in the direction $Y=0$, the direction $Y=1$ is counterfactual for Bob. Again Bob can assert that had he measured in the $Y=1$ direction he would have got some definite +/- output, since it is again no constraint to assume that the cosine of the angular distance between $Y=0$ and $Y=1$ is rational. 

We need to note that in order that \emph{both} Alice's actual direction $X=0$ and Bob's counterfactual direction $Y=1$ are both permissible counterfactuals relative to Bob's actual direction $Y=0$, then the angle subtended at the vertex $Y=0$ must itself be a rational angle. Again, by the density of rational angles, there is no constraint preventing that. 

All good so far. But we are set up for the killer blow. Let us return to Alice's look-up table. We have filled in three entries in the fourth column. However, in order to infer a value in the look-up table for the $Y=1$ row, Alice must be able to assert the `double' counterfactual C: `if I had measured in the $Y=1$ direction (Bob's counterfactual direction), I would have got the opposite of what Bob would have got had he measured in this direction.' But now we can invoke the Impossible Triangle Corollary - see Fig \ref{CHSHtriangle}. We have already specified that the angular distances of the sides $(X=0, Y=0)$ and $(Y=0, Y=1)$ of our triangle have rational cosines, and that the angle subtended at the common vertex $Y=0$ is rational. \emph{It is therefore impossible for the third side $(X=0, Y=1)$ to have a rational cosine}. Hence Alice's counterfactual assertion C is neither true nor false, it is simply undefined. It is inconsistent with the assumed discretisation of Hilbert space. This means we cannot fill in the row $Y=1$ for the fourth column of the look=up table. Indeed, for each column of the look-up table we are unable to fill in all four rows. As such our model is not constrained by Bell's inequality. Indeed, since the model is based on an arbitrarily fine discretisation of complex Hilbert space, we can claim that it violates Bell inequalities exactly as does quantum mechanics (respecting, for example, the Tsirelson bound). 

Note that we have not violated local causality here. This is easy to see simply because \emph {we have never once mentioned the notion of causality}. This may perhaps seem surprising. However, it is worth quoting Bell himself here \cite{Bellb}:
\begin{quote}
I would insist here on the distinction between analysing various physical theories, on the one hand, and philosophising about the unique real world on the other hand. In this matter of causality, it is a great inconvenience that the real world is given to us once only. We cannot know what would have happened if something had been different. We cannot repeat an experiment changing just one variable the hand of the clock will have moved, and the moons of Jupiter. Physical theories are more amenable in this respect. We can calculate the consequences of changing free elements in the theory, be they only initial conditions, and so can explore the causal structure of the theory. I insist that [Bell's Theorem] is primarily an analysis of certain kinds of physical theory. 
\end{quote}
Here Bell is saying that, fundamentally, Bell's Theorem is primarily an analysis of the degrees of freedom in putative theories of quantum physics. 

Neither have we invoked some kind of indefinite reality. Rather, as shown below, we have violated the Statistical Independence assumption in Bell's Theorem - but in a natural way, and not in terms of the grotesque parody described in the Introduction. 

\section{Towards a Superdeterministic Theory of Quantum Physics}
\label{det}

Here, motivated by the fractal invariant sets of chaotic dynamical systems, we outline a geometric construction for a deterministic model that can generate the discretised Hilbert space discussed above. Consider the trajectory of a dynamical system as a fractal curve, akin to a length of rope (strands around strands around strands...) as shown in Fig \ref{invariant}a. Like all fractals we can define this fractal trajectory in an iterative way. To first order the trajectory resembles a solid tube of radius proportional to $\hbar$. In the limit $\hbar=0$, this tube collapses to a one-dimensional curve (of elementary classical physics). This tube therefore can describe the semi-classical physics of Maxwell and Dirac fields. To second order, we find that the tube comprises a helix of $p$ smaller tubes, each with radius proportional to $\hbar^2$.  Properties of the ensemble of such helical tubes is described by the discretised Hilbert space above. Any single element of this helix is again describable by semi-classical physics. However, it too comprises by a helix of $p$ trajectories, and so on. This picture of the laws of physics continually iterating between classical and quantum descriptions can readily describe such mysteries as wave-particle duality, as will be described in more detail elsewhere. 
\begin{figure}
\centering
\includegraphics[scale=0.3]{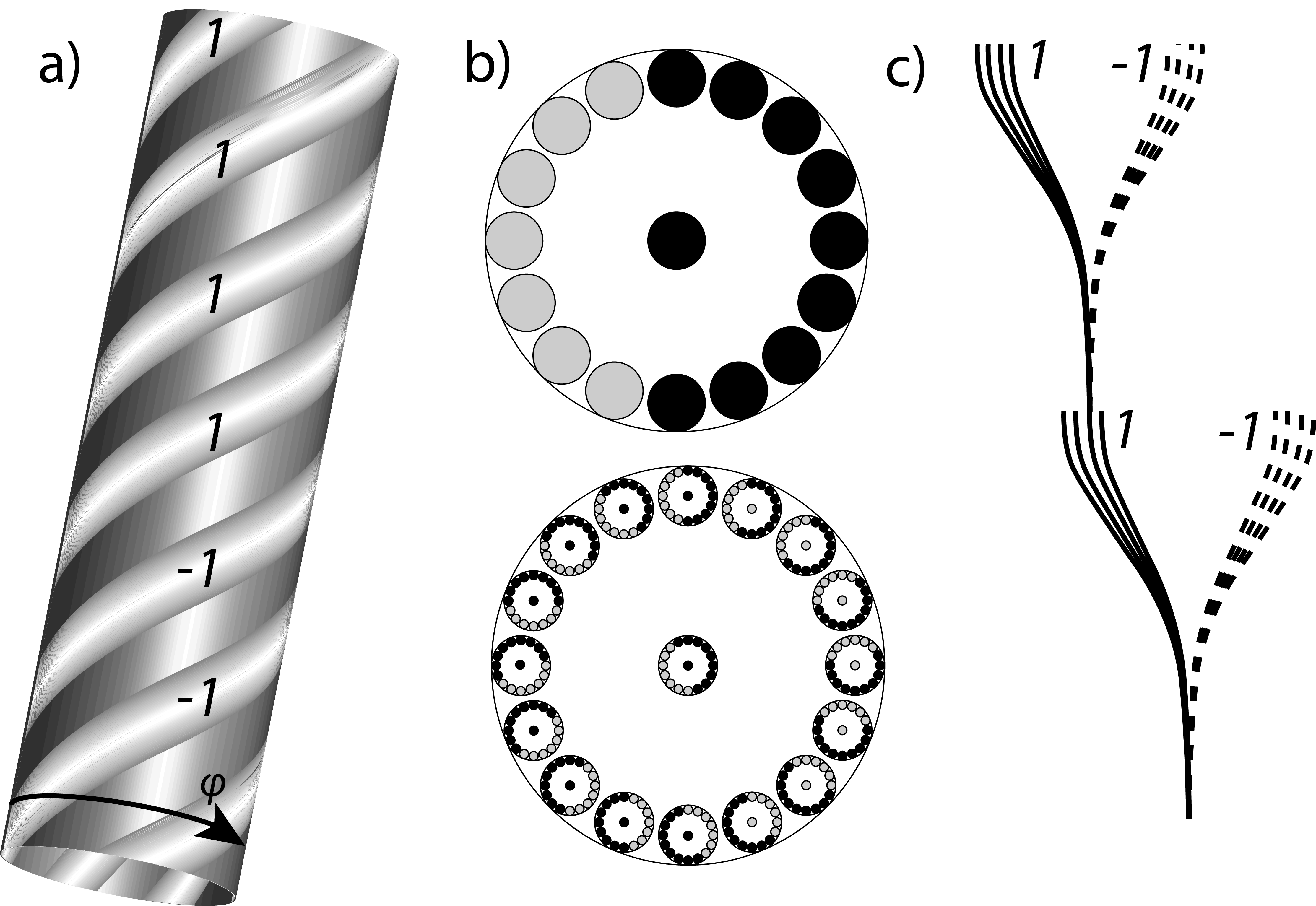}
\caption{\emph{The state-space geometric structure of the fractal invariant set of a deterministic dynamical system which generates the type of discretised Hilbert space described above. a) each trajectory segment at any level of fractal iterate actually comprises a helix of trajectory segments. b) A cross-section through a helix (shown at two iterate levels) reveals a fractal structure homeomorphic to the set of $p$-adic integers. c) the process of quantum measurement is associated with a divergence and nonlinear clustering into measurement eigenstates. Divergence and clustering is shown for two levels of fractal iterate. The trajectories iterates are labelled `1' or `-1 or coloured black/grey, according to the cluster to which a trajectory evolves. Here is illustrated the situation for $p=17$. In practice $p$ is presumed to be much larger.}}
\label{invariant}
\end{figure}
If we take a cross section of this fractal trajectory, we will obtain something topologically like that shown in Fig \ref{invariant}b) for $p=17$ (for illustration). This cross section is equivalent to a Cantor set with $p$ iterated pieces. Importantly, the set of $p$-adic integers is homeomorphic to such a Cantor set \cite{Katok}. This gives us a strong clue about the type of deterministic dynamical system needed to explain our discretised Hilbert space: it will be a $p$-adic dynamical system based on mappings of $p$-adic integers \cite{Dragovich} \cite{Khrennikov}. This has important implications for the notion of conspiracy and fine tuning, discussed below.  

Let us fix on the second-order iteration. Imagine that as a quantum system interacts with the measuring device, the $p$ trajectories diverge exponentially from one another in an initially linear phase, but ultimately cluster together nonlinearly into a number $N < p$ of discrete clusters. These clusters correspond to measurement eigenstates. For simplicity, Fig \ref{invariant}c) shows a situation with just $N=2$ clusters which we label as $1$ and $-1$,, at two separate levels of fractal iteration. 

We can label each of the $p$ trajectories before divergence with the cluster label to which the trajectory evolves in the future, as shown in Fig \ref{invariant}a). This does not imply or invoke retrocausality. It is merely a method to label a trajectory based on its future evolution (much as we would label points in space-time as lying inside or outside a black-hole event horizon -- the latter being defined as the boundary of null geodesics that escape to future null infinity). In the case where there are two measurement eigenstates (a qubit), the $p$ trajectories can be represented by a bit string. Indeed, let $S(m)$ denote the bit string
\be
\label{bitstring}
\{\underbrace{1, 1,\ldots 1}_{m}, \underbrace{-1, -1, \ldots -1}_{p-m}\}
\ee
and let $\zeta$ denote a cyclic permutation of $S(m)$ so that $\zeta^p S(m)=S(m)$. The group $\{\zeta^n\}$ is a representation of the multiplicative group $\{e^{2 \pi i n/p}\}$ of $p$th roots of unity. We can therefore write 
\be
S(n,m) \equiv \zeta^n S(m) \equiv e^{2\pi n/p} \; S(m) 
\ee 
Indeed using (\ref{dis}), we can define $S(\phi, \theta)=S(n,m)$ so that each point on the discretised Bloch sphere, and hence any discretised Hilbert state as described by (\ref{qubit}), is associated with a bit string (i.e. an ensemble of symbolically labelled trajectories) of the form $S(n,m)$ modulo a global permutation (corresponding to a global phase factor). Now let $<\ldots >$ denotes the mean over the bit string and $\Delta S$ denotes the standard deviation of the bit string. Then, from (\ref{bitstring})
\be
<S(\phi, \theta)> = \frac{1}{p} [m -(p-m)] = 2 \frac{m}{p}-1 = \cos \theta
\ee
using (\ref{dis}). Similarly
\be
(\Delta S)^2=<S^2>-<S>^2=1-\cos^2 \theta
\ee
so that 
\be
\Delta S = |\sin \theta|
\ee
As an application of this symbolic bit-string representation, we derive the quantum Uncertainty Principle for spin-1/2 particles using spherical triangles and the Impossible Triangle Corollary. In particular, consider a generic point $p$ on the Bloch Sphere, whose co-latitudes with respect to three polar directions $p_z$, $p_x$ and $p_y$ are $\theta$, $\theta'$ and $\theta''$. See Fig \ref{uncertainty}
\begin{figure}
\centering
\includegraphics[scale=0.3]{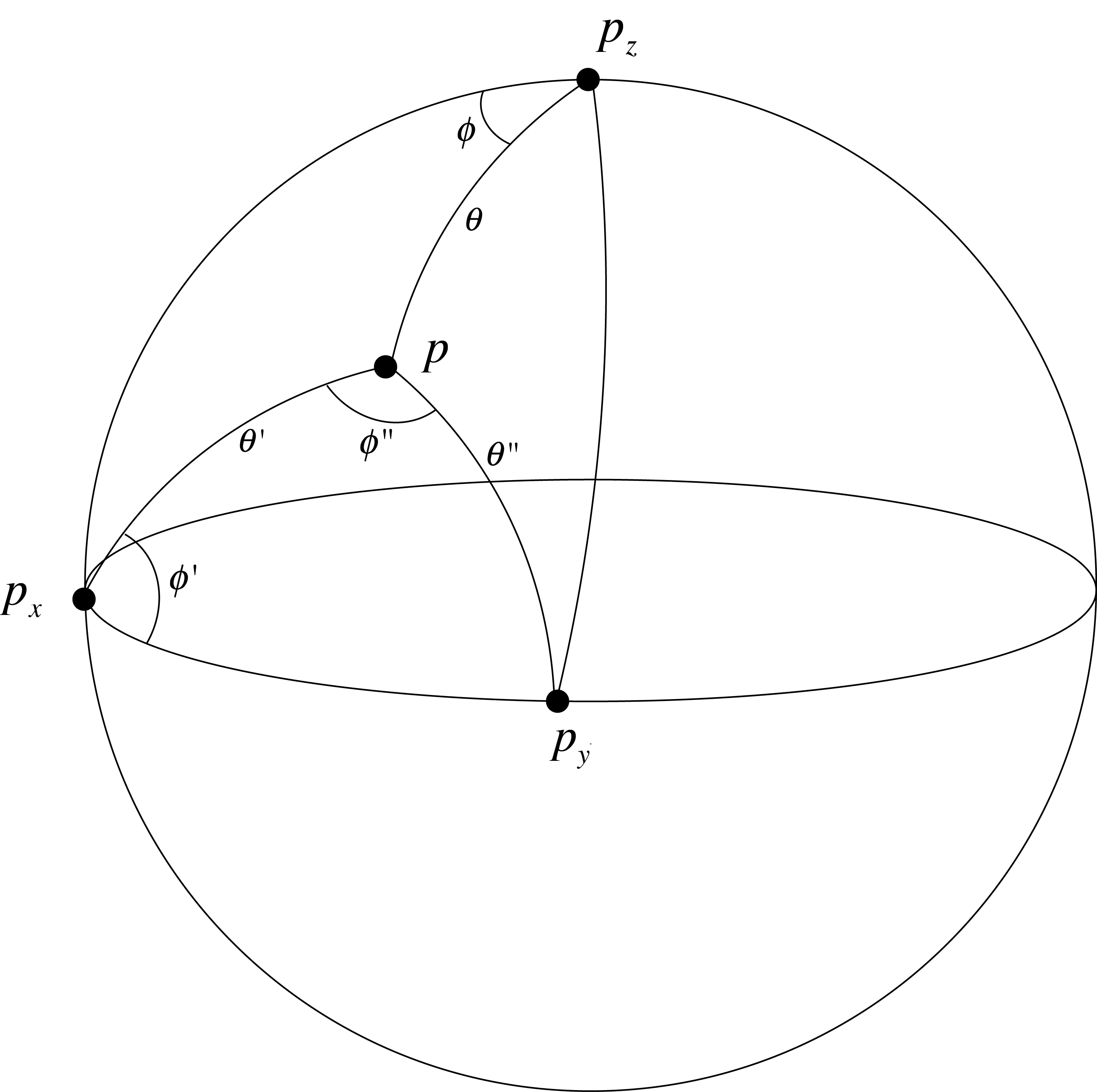}
\caption{\emph{The Uncertainty Principle from geometric and number-theoretic considerations. Using the sine and cosine rule for spherical triangles we show that $\Delta S_x \Delta S_y \ge | <S_z>|$, where the terms are defined in the text. Using the Impossible Triangle Corollary, we show that it is impossible to determine spin simultaneously relative to any two different directions.}}
\label{uncertainty}
\end{figure}

First apply the cosine rule on the spherical triangle $\triangle p p_x p_z$ so that
\be
\cos \theta = \cos \theta' \cos \frac{\pi}{2} + \sin \theta' \sin \frac{\pi}{2} \sin \phi'
\ee
which implies
\be
\label{up1}
\cos \theta = \sin \theta' \sin \phi'
\ee
Now apply the sine rule on the spherical $\triangle p p_x p_y$ so that
\be
\frac{\sin \phi'}{\sin \theta''}=\frac{\sin \phi''}{\sin  \frac{\pi}{2}}= \sin \phi''
\ee
which implies
\be
\label{up2}
 |\sin \phi'| \le |\sin \theta''| 
\ee
Putting (\ref{up1}) and (\ref{up2}) together we have
\be
|\sin \theta'| |\sin \theta''| \ge |\cos \theta|
\ee
or
\be
\label{up3}
\Delta S_x \Delta S_y \ge | <S_z>|
\ee
If instead of (\ref{bs}) the bit strings are dimensional of the form
\be
\label{bs}
S(m)=\{\underbrace{\frac{\hbar}{2}, \frac{\hbar}{2},\ldots \frac{\hbar}{2}}_{m}, \underbrace{-\frac{\hbar}{2}, -\frac{\hbar}{2}, \ldots -\frac{\hbar}{2}}_{p-m}\}
\ee
then (\ref{up3}) can be written
\be
\Delta S_x \Delta S_y \ge \frac{\hbar}{2} | <S_z>|
\ee
which is the standard form for the Uncertainty Principle in quantum mechanics. 

But of course there is much more to the Uncertainty Principle. It also encompasses the notion of quantum complementarity, that if the state on the Bloch Sphere at the point $p$ cannot simultaneously describe a measurement outcome with respect to the two directions $p_z$ and $p_x$ say. In the current superdeterministic theory, this idea is expressed by showing that if $\cos \theta$ is rational, then $\cos \theta'$ cannot be rational. 

This can be shown from the Impossible Triangle Corollary applied to the spherical triangles $\triangle p p_z p_x$. Here we do not need to assume that the angular distance between $p_z$ and $p_x$ is precisely a right angle. It's merely enough to assume that the cosine of the angular distance between $p_z$ and $p_x$ is rational. By the Corollary, if $\cos \theta$ is also rational, then $\cos \theta'$ cannot be rational. Hence the point $p$ cannot be simultaneously be associated with a measurement in the $z$ direction and the $x$ direction. A similar argument applies to the triangle $\triangle p p_z p_y$. 

Based on geometric notions of fractal invariant sets, we have described geometrically a deterministic model whose ensemble properties generate the discretisation of Hilbert Space described in Section \ref{dhs}. Ultimately, this invariant set must describe the evolution of the whole universe. That is to say, motivated by nonlinear dynamical systems theory, we assume the universe can be considered a deterministic dynamical system evolving precisely on its fractal invariant set. This means that the most primal quantity in physics is not some assumed cosmological initial condition, to be mapped forward by a dynamical evolution law, but as a `timeless' geometry in state space. 

As discussed, the bit strings discussed above describe labels for a fractal geometry homeomorphic to the set $\mathbb Z_p$ of $p$-adic integers. This has important implications because the $p$-adic metric has different properties to the more familiar Euclidean metric. For example, the $p$-adic distance between two points in $\mathbb Z_p$ is always less than unity. By contrast the distance between two points in the set of $p$-adic numbers $\mathbb R_p \supset \mathbb Z_p$ can exceed $p$. Based on this we can define a metric on the Euclidean space in which our fractal trajectories are embedded: for any two points which lie on the fractal trajectory, the distance between such points is $p$-adic and necessarily less than unity (and decreases by a power of $p$ for each increasing fractal iterate). In all other situations the distance between two points is equal to $p$. It is easy to show this satisfies the conditions for a metric. If $p$ is large, then two points which do not both lie on the invariant set are necessarily distant from one another, even though they may appear close from a Euclidean perspective. This has major implications for the issue of conspiracy and fine tuning, discussed in the Conclusions section. 

The theory described here is referred as invariant set theory \cite{Palmer:2020}. Invariant set theory can be described using the language of hidden variables. That is to say, we can describe some measurement outcome by the form $Sp=Sp(\lambda, X, Y)$ where $\lambda$ describes a particle's hidden variables and $X$ denotes measurement settings. The theory is local because based on the discussion in Section \ref{Bell} we can write $Sp=Sp(\lambda, X)$ for Alice's particles. If we relate this to the form $S=S(\phi, \theta)$, then we can associate $\lambda$ with one of the $p$ trajectory segments in Fig \ref{invariant}. In space-time, $\lambda$ describes a particle relative to all the other particles in the universe. 

Using hidden variable language we can show that invariant set theory violates the Statistical Independence assumption and hence is superdeterministic - though we now prefer to use the word `supermeasured' \cite{Hance:2022} to describe such a theory - to distinguish this type of theory from those that are genuinely conspiratorial. 

For example, let $X=1$ denote the Mach-Zehnder set up when Alice chooses to perform an interferometric experiment, and $X=0$ the Mach-Zehnder set up when she chooses to perform a which-way experiment. Then Niven's theorem implies
\be
\label{SuperD}
\rho(\lambda | X) \ne 0 \implies \rho(\lambda | X')=0
\ee
where $X'=1-X$. A theory where (\ref{SuperD}) holds violates Statistical Independence because it implies that $\rho(\lambda | X) \ne \rho(\lambda)$. In the case of the CHSH experiment, using the Impossible Triangle Corollary we have shown that
\be
\label{SD}
\rho(\lambda | X Y) \ne 0 \implies \rho(\lambda | X Y')=0
\ee
where $Y'=1-Y$. Again this implies $\rho(\lambda | X Y) \ne \rho(\lambda)$ which is a violation of the Statistical Independence assumption in Bell's Theorem. 

\section{The Aaronson Challenge}
\label{aar}

We can finally discuss the Aaronson Challenge \cite{Aaronson}. Let us first state the challenge in the form Aaronson posed it at the end of his blogpost:

\begin{quote}
Here's my challenge to the superdeterminists: when, in 400 years from Galileo to the present, has such a gambit ever worked? Maxwell's equations were a clue to special relativity. The Hamiltonian and Lagrangian formulations of classical mechanics were clues to quantum mechanics. When has a great theory in physics ever been grudgingly accommodatedby its successor theory in a horrifyingly ad-hoc way, rather than gloriously explained and derived?
\end{quote} 

This raises the question: how typically do new theories accommodate the old theories that they replace. The renowned theoretical physicist Michael Berry has addressed this question and concluded \cite{Berry} that typically an old theory is a singular limit of a new theory as some parameter of the new theory is set equal to infinity or zero. The behaviour of the old theory does not gradually emerge from the new theory as this parameter tends to zero or infinity. Berry cites as examples how the old theory of ray optics emerges from Maxwell theory, or as the old theory of thermodynamics emerges from statistical mechanics. Newtonian gravity can similarly be thought of as a singular limit of general relativity - Newtonian gravity is not a theory of space-time geometry for slowly moving small masses. And indeed classical physics is a singular limit of quantum mechanics as $\hbar$ is set equal to zero. 

As a graphic example of a singular limit, Berry imagines biting into an apple and finding half a maggot -- unpleasant as it implies you have eaten the other half. Worse still is biting into an apple and finding a quarter of a maggot as it implies you have eaten three quarters. Continuing this, then in a limiting sense biting into an apple and finding no maggot would be the worst experience of all. But obviously it's not, since by and large we don't tend to find maggots in our apples. 

Hence, if this process is typical, and if the successor theory to quantum mechanics is indeed superdeterministic, we would expect quantum mechanics to arise as a singular limit of the successor theory. 

This is exactly the situation with respect to the parameter $p$. In our superdeterministic theory $p$ is finite but can be as large as we like. The fractal gaps exist no matter how large is $p$, and these fractal gaps can be invoked to account for the inconsistency of certain key counterfactuals needed, for example, to explain deterministically the violation of Bell's inequality. These fractal gaps only disappear \emph{at} the limit $p = \infty$. At this limit, and only at this limit, the state space of the discretised set of Hilbert states becomes an (arithmetically closed) Hilbert space. In this sense the state space of quantum mechanics is explained and derived as a singular limit of a candidate superdeterministic successor theory. 

Indeed the same consideration applies to the classical limit at $\hbar=0$ in invariant set theory. At this limit, but only at this limit, the fractal helix collapses to a single state-space trajectory, as we would expect from any classical dynamical system based on deterministic differential equations or finite differences. By contrast the fractal gaps and discretised Hilbert space formalism applies for any finite value of $\hbar$ no matter how small. 

In conclusion, quantum mechanics is not `grudgingly accommodated' by the candidate superdeterministic theory. Rather, it is accommodated in exactly the same way that other great theories have been accommodated by their successor theories, from the time of Galileo and Newton to the present. Whether this explanation is `glorious', we leave the reader to decide. 

\section{Conclusions}

What is the difference between a deterministic theory and a superdeterministic theory? Both are deterministic in the sense that the future is determined by the past. However, in a superdeterminstic theory it is not \emph{necessarily} the case that, had the past been different in a certain way, the future would be different in a certain way. The latter invokes counterfactual reasoning - if I hadn't thrown the stone keeping everything else fixed, the window wouldn't have broken - and used to infer causality (if the counterfactual is true I must have caused the window to break). In a superdeterministic theory, quantum counterfactuals may or may not be inconsistent with the fundamental premises of the theory. In discussing Bell's theorem we have discussed counterfactual experiments that are consistent with the underlying rules of our superdeterministic model, and other counterfactual experiments that are inconsistent with our superdeterministic model. The latter helps explain why counterfactual causal reasoning can be so problematic in quantum physics. 

Let us return to the parody of superdeterminism as described in the Introduction. What's wrong with it? 

The first point that is wrong is the delineation into dynamical laws and initial conditions, as if these were independent of each other. Whilst this is the way we describe things in classical physics, it is not a property of the model described here. Instead we have described a model whose fundamental tenet is the idea that the universe is a deterministic system evolving on some fractal invariant set in cosmological state space. The geometry of the invariant set is fundamental. In this picture, one cannot vary initial conditions arbitrarily, keeping the dynamical laws fixed -- varying the initial conditions arbitrarily will take you off the invariant set to a point which is inconsistent with the dynamical laws. 

On top of this, properties of the fractal geometry of invariant sets are formally non-computable \cite{Dube:1993} \cite{Blum}. Hence it is not possible to have foreknowledge of whether a particular state lies on the invariant set or not. We cannot postulate a daemon who knows, at the time of the Big Bang, that Alice's choice of measurement setting depends on her grandmother's birthday. 

The second point that is wrong is the notion that there is only one initial state that could lead to the violation of Bell inequalities. The cardinality of the Cantor set is no less than the cardinality of the set of real numbers. There are infinitely many initial states that will lead to a violation of Bell inequalities \cite{HanceHossenfelder}. 

However, probably the most important misconception in the parody is in the notion of fine tuning and hence of conspiracy. It hardly needs saying but one cannot describe some theory as fine tuned without specifying a metric with respect to which the tuning is deemed fine. Ostrowsky's theorem \cite{Katok} tells us that, fundamentally, there are only two (norm-induced) classes of metric in mathematics: the Euclidean metric and the $p$-adic metric. Because of the close affinity of $p$-adic numbers to fractal geometry, the $p$-adic metric is a natural metric to use for a model based on fractal geometry in state space. As discussed, a point which does not lie on the invariant set (homeomorphic to the set $\mathbb Z_p$ of $p$-adic integers) is $p$-distant from a point on the invariant set, no matter how close these two points appear from a Euclidean perspective. Here we have assumed $p \gg 1$ in order that experimental results, consistent with quantum mechanics, are also consistent with our discretised Hilbert space. The idea that one can always do minimal `surgeries' on events in  space-time \cite{Pearl} in order to argue about counterfactual causality is false in invariant set theory for the simple reason that in the relevant $p$-adic norm these surgeries may not be minimal, they may be maximal.

Hence, if I want to argue the quantum equivalent of the argument that my throwing the stone caused the window to break, I should do so based on physics-based calculations in space-time (which has a causal structure), and not on potentially inconsistent counterfactual inferences based on surgeries in state space. As discussed, the analysis of Bell's theorem relative to our discretised Hilbert space makes no mention of causal structures in space-time. As such, it does not invoke or require any breakdown of the locally causal structure of relativity theory. Hence, to return to a key motivation for this work discussed in the Introduction, this superdeterministic model may better combine with general relativity theory than does quantum mechanics. 

It is worth concluding with a remark by Bell himself \cite{Bellb}:

\begin{quote}
Of course it might be that these reasonable ideas about physical randomisers are just wrong - for the purposes at hand. A theory may appear in which such [apparent] conspiracies inevitably occur, and these conspiracies may then seem more digestible than the non-localities of other theories. When that theory is announced I will not refuse to listen, either on methodological or other grounds. 
\end{quote}

It is a shame that modern-day commentators are not as open minded. 

\section*{Acknowledgements}
My thanks to Jonte Hance and Sabine Hossenfelder for many helpful comments on an early draft of this paper. This work has been supported by a Royal Society Research Professorship. 

\bibliography{mybibliography}
\end{document}